\documentclass{article}
\usepackage{graphicx}
\usepackage{algorithm2e}
\usepackage{float}

\usepackage[color=orange!25,linecolor=orange!80!black]{todonotes}

\usepackage{amsmath, amssymb} 
\usepackage{amsthm}           

\theoremstyle{definition}
\newtheorem{definition}{Definition}

\theoremstyle{plain}
\newtheorem{theorem}{Theorem}
\newtheorem{proposition}{Proposition}
\newtheorem{lemma}{Lemma}

\usepackage{tikz}
\usepackage{tikz-network}
\usetikzlibrary{arrows,arrows.meta,calc}
\usetikzlibrary{shadows}
\usepackage{enumerate}
\usepackage{xspace}
\usetikzlibrary{shapes.multipart}
\usetikzlibrary{positioning}
\usetikzlibrary{shadows}
\usetikzlibrary{shapes.multipart}

\usepackage{xcolor-solarized}

\usepackage{listings}
\usepackage{appendix}
\lstdefinelanguage{pgkeysLang}
{
  morekeywords={
    NODE,
    EDGE,
    TYPE,
    IMPORTS,
    OPTIONAL,
    DATE,
    INT,
    STRING,
    BOOL,
    ARRAY,
    ENUM,
    DOUBLE,
    INT32,
    DATETIME,
    STRICT,
    LOOSE,
    OPEN,
    CLOSED,
    ABSTRACT,
    CREATE,
    GRAPH,
    FOR,
    WITHIN,
    EXCLUSIVE,
    MANDATORY,
    SINGLETON,
    FLOAT,
    IDENTIFIER,
    MATCH,
    WHERE,
    NOT,
    OR,
    AND,
    EXISTS,
    RETURN,
    IN,
    IS,
    NULL,
    hasKey,
    FunctionalProperty,
    context,
    test,
    unique,
    selector,
    field,
    allInstances,
    isUnique,
    Tuple,
    class,
    extent,
    relationship,
    inverse,
    attribute,
    COUNT,
    OF
  },
  sensitive=false, 
  morecomment=[l]{//}, 
  morecomment=[s]{/*}{*/}, 
  morestring=[b]" 
}

\usepackage{color}
\definecolor{eclipseBlue}{RGB}{42,0.0,255}
\definecolor{eclipseGreen}{RGB}{63,127,95}
\definecolor{eclipsePurple}{RGB}{127,0,85}
 
\usepackage{xcolor}

\usepackage[table]{xcolor}
\definecolor{LightGray}{gray}{0.93}

\usepackage[style=numeric, backend=biber]{biblatex}
\newcommand{\OMIT}[1]{}

\newcommand{\np}{\mathrm{NP}}
\newcommand{\ptime}{\mathrm{PTIME}}

\newcommand{\KK}{\ensuremath{\mathcal{K}}}
\newcommand{\LL}{\ensuremath{\mathcal{L}}}

\newcommand{\RR}{\ensuremath{\mathcal{R}}}

\newcommand{\VV}{\ensuremath{\mathcal{V}}}

\newcommand{\pgschema}{\textsc{PG-Schema}\xspace}
\newcommand{\pgtypes}{\textsc{PG-Types}\xspace}

\newcommand{\pgkeys}{\textsc{PG-Keys}\xspace}

\newcommand{\pto}{\rightharpoonup}

\newcommand{\omitforcameraready}[1]{}

\title{Validation of graph databases against PG-Schema}

\date{May 2026}

\author{
Jacek Ciszewski\\
University of Warsaw \and 
Jakub K{\l}os\\
University of Warsaw \and 
Maxime Jakubowski\\ 
TU Wien \and
Dominik Tomaszuk\\
University of Bia{\l}ystok \and
Filip Murlak\\
University of Warsaw
}

\begin{document}
\maketitle

\begin{abstract}
The problem of validating a given graph database instance against a given \pgschema graph type without integrity constraints is NP-complete in terms of combined complexity and in PTIME in terms of data complexity. The combined complexity drops to PTIME when the alternation between type combinations and unions is suitably restricted.
\end{abstract}

\section{Property Graph Data Model}
\label{sec:property-graph}

We assume countable sets $\LL$, $\KK$, and $\VV$ of \emph{labels}, \emph{property names (keys)}, and \emph{property values}. A \emph{record} with keys from $\KK$ and values from $\VV$ is a finite-domain partial function $o \colon \KK \pto \VV$ mapping keys to values.
We write $|o|$ for the size of the domain of record $o$ and $\RR$ for the set of all records.

\begin{definition}[Property Graph]
A \emph{property graph} is defined as a tuple $G = (N, E, \rho, \lambda, \pi)$ where:
\begin{itemize}
\item 
$N$ is a finite set of nodes; 
\item 
$E$ is a finite set of edges 
such that $N \cap E = \emptyset$;
\item 
$\rho : E \rightarrow (N \times N)$ is a total function mapping edges 
to ordered pairs of nodes (the endpoints of the edge); 
\item 
$\lambda : (N \cup E) \rightarrow 2^{\LL}$ is a total function mapping nodes and edges 
to finite sets of labels (including the empty set);
\item 
$\pi : (N \cup E) \to \RR$ is a total function mapping nodes and edges to records. 
\end{itemize}
\end{definition}

For an edge $e \in E$ with $\rho_G(e) = (u,v)$, the nodes $u$ and $v$ are the \emph{endpoints} of $e$, where  $u$ is the \emph{source} and $v$ is the \emph{target} of $e$.
For an element $x \in N \cup E$, the record $\pi(x)$ collects all \emph{properties} of $x$ (key-value pairs) and is called the \emph{content} of $x$.

\begin{figure}[t]
\centering
    \footnotesize
    \begin{tikzpicture}[
      Node/.style={
        rounded corners,
        rectangle split,
        rectangle split parts=2,
        rectangle split part fill={green!80!black!15!white,solarized-base3!50!white},
        draw=green!20!black,
        semithick
      },
      NodeName/.style=
      {
        text=blue,
      },
      ISA/.style={
        thick,-{Triangle[open]},
      },
      Edge/.style={
        thick,-stealth',
      },
      EdgeName/.style=
      {
        text=blue,
      },
      EdgeLabels/.style=
      {
        sloped,
      },
      EdgeProperties/.style={
        sloped,
        below,
        rounded corners,
        fill=solarized-base3!50!white,
      },
      Junction/.style={
        circle,
        draw=black,
        fill=white,
        thick
      }
      ]

      \node[Node] (person) at (3, 5) {
        \texttt{Person}, \texttt{Customer}
        \nodepart{two}
        \begin{tabular}{l}
        \texttt{name: 'Jan Kowalski'}\\
        \texttt{birthdate: 12.12.13}\\
        \texttt{id: 0000123001}
        \end{tabular}
      };

      \node[NodeName] (personname) at (1, 6) {$u_1$};
      \node[NodeName] (companyname) at (5.7, 5.8) {$u_2$};
      \node[NodeName] (person2name) at (-1, 3.05) {$u_3$};
      \node[NodeName] (accountname) at (2.5, 0.7) {$u_4$};

      \node[EdgeName] (owns1name) at (4, 3.8) {$e_1$}; 
      \node[EdgeName] (owns1name) at (7.9, 4) {$e_2$};

      \node[Node] (company) at (8, 5) {
        \texttt{Company}
        \nodepart{two}
        \begin{tabular}{l}
        \texttt{name: 'Kowalski \& Sons'}\\

        \texttt{VAT\_Number: '826009065'}
        \end{tabular}
      };

      \node[Node] (person2) at (1, 2.3) {
        \texttt{Person}
        \nodepart{two}
        \begin{tabular}{l}
        \texttt{name: 'Jan Jansen'}\\
        \texttt{birthdate: 27.05.97}
        \end{tabular}
      };
      
      \node[Node] (account) at (6, 0) {
        \texttt{Account}
        \nodepart{two}
        \begin{tabular}{l}
        \texttt{iban: '94 0000 0000 1977 1341 1345 2342'}
        \end{tabular}
      };

      \draw[] (person) edge[Edge] 
      node[above,EdgeLabels] {\texttt{Owns}}       node[below,EdgeProperties] {\texttt{since: 01.01.25}}
      (account);

\draw[] (company) edge[Edge] 
      node[above,EdgeLabels] {\texttt{Owns}}       node[below,EdgeProperties] {\texttt{since: 11.11.19}}
      (account);
      
    \end{tikzpicture}
\caption{A small customer graph.}
\label{fig:graph}
\end{figure}

Figure~\ref{fig:graph} shows graph $G=(N,E,\rho, \lambda, \pi)$ where
$N=\{u_1, u_2, u_3, u_4\}$, $E=\{e_1, e_2\}$, $\rho(e_1) = (u_1,u_4)$, $\rho(e_2) = (u_2,u_4)$, $\lambda(u_1) = \{{\small\texttt{Person}}, {\small\texttt{Customer}}\}$, $\lambda(u_2) = \{\small{\texttt{Company}}\}$, $\lambda(u_3) = \{\small{\texttt{Person}}\}$, $\lambda(u_4) = \{\small{\texttt{Account}}\}$, $\lambda(e_1) = \lambda(e_2) =\{\small{\texttt{Owns}}\}$, and $\pi(u_1) = \{{\small \texttt{name}}\mapsto {\small\texttt{'Jan Kowalski'}},\, {\small\texttt{birthdate}} \mapsto {\small\texttt{12.12.13}},\,{\small\texttt{id}} \mapsto {\small\texttt{0000123001}}\}$, etc. 

For a graph $G = (N, E, \rho, \lambda, \pi)$ we write $\|G\|$ for the total number of nodes, edges, labels, and properties in $G$; that is,  \[\|G\| = |N|+|E|+\sum_{x\in N\cup E} \big(|\lambda(x)| + |\pi(x)|\big)\,.\] 
For $v\in N$ we let $\|v\|_G=|\lambda(v)| + |\pi(v)|$,  and for $e\in E$ we let $\|e\|_G=|\lambda(e)| + |\pi(e)| + \|u_1\|_G + \|u_2\|_G$ where $\rho(e)=(u_1, u_2)$.

\section{PG-Schema}
\label{sec:pgschema-def}

\begin{figure}[t]
\centering
    \footnotesize
    \begin{tikzpicture}[
      Node/.style={
        rounded corners,
        rectangle split,
        rectangle split parts=2,
        rectangle split part fill={green!80!black!15!white,solarized-base3!50!white},
        draw=green!20!black,
        semithick
      },
      ISA/.style={
        thick,-{Triangle[open]},
      },
      Edge/.style={
        thick,-stealth',
      },
      EdgeProperties/.style={
        sloped,
        below,
        rounded corners,
        fill=solarized-base3!50!white,
      },
      Junction/.style={
        circle,
        draw=black,
        fill=white,
        thick
      }
      ]


      \node[Node] (person) at (1, 3.3) {
        \texttt{Person}
        \nodepart{two}
        \begin{tabular}{l}
        \texttt{name STRING}\\
        \texttt{birthdate DATE}
        \end{tabular}
      };

      \node[Node] (company) at (5, 3.3) {
        \texttt{Company}
        \nodepart{two}
        \begin{tabular}{l}
        \texttt{name STRING}\\

        \texttt{VAT\_Number STRING}
        \end{tabular}
      };

      \node[Node] (customer) at (1, 1) {
        \texttt{Customer}
        \nodepart{two}
        \begin{tabular}{l}
        \texttt{id INT32 PK}
        \end{tabular}
      };
      
      \node[Node] (account) at (7, 1) {
        \texttt{Account}
        \nodepart{two}
        \begin{tabular}{l}
        \texttt{iban STRING PK}
        \end{tabular}
      };

      \draw[] (customer) edge[ISA] 
      (person);

      \draw[] (1,2) edge (5,2);
      \draw[] (5,2) edge[ISA] 
      (company);

      \draw[] (customer) edge[Edge] 
      node[above] {\texttt{Owns}} 
      node[below,EdgeProperties] {\texttt{since DATE}}
      (account);

      \node[Junction] (xor) at (1,2) {\scriptsize{\texttt{|}}};

      \node at (2.6,1.26){\texttt{1..*}};

      \node at (5.1,1.26){\texttt{0..1}};
     
    \end{tikzpicture}
\caption{A diagram of a customer graph schema.}
\label{fig:schema-diagram}
\end{figure}

\begin{figure}[t!]
\begin{lstlisting}
CREATE GRAPH TYPE customerGraph STRICT {
  (person: Person {name STRING, birthdate DATE}),
  (company: Company {name STRING, VAT_Number STRING}),
  (customer: Customer & (person|company) {id INT32}),
  (account: Account {iban STRING}),
  (:customer) -[owns: Owns {since DATE}]-> (:account),
  FOR (c:customer) EXCLUSIVE c.id,
  FOR (a:account) EXCLUSIVE a.iban,
  FOR (a:account) MANDATORY ()-[:owns]->(a),
  FOR (c:customer) SINGLETON (c)-[:owns]->()
}

\end{lstlisting}
\caption{\pgschema{} of a customer graph schema.}
\label{fig:pg-schema}
\end{figure}

\pgschema \cite{pgschema} is a formalism for specifying graph schemas, called \emph{graph types} by means of \emph{node} and \emph{edge types}, as well as \emph{integrity constraints} expressed in a sublanguage called \pgkeys~\cite{pgkeys}. We explain the core of \pgschema{} on an example; for full syntax and semantics see \cite{pgkeys,pgschema}.

Consider the graph schema shown in Figure~\ref{fig:schema-diagram} as an ERD-like diagram. The branching generalisation arch decorated with $|$ means that each customer is either a person or a company. This graph schema can be expressed in \pgschema as shown in Figure~\ref{fig:pg-schema}. The schema defines a graph type \lstinline{customerGraph} with node types \mbox{\lstinline{person},} \lstinline{company}, 
\lstinline{customer}, and
\lstinline{account}, edge type \lstinline{owns}, and four integrity constraints. The keyword \lstinline{STRICT} indicates that in a graph of type \lstinline{customerGraph}, each node and edge must conform to one of the defined types. The other alternative is \lstinline{LOOSE}, which indicates that nodes and edges might not conform to any of the defined types, but integrity constraints must still hold.

Each node type $t$ specifies possible combinations of labels and properties. If a node $u$ satisfies this specification, we say $u$ \emph{conforms to type} $t$. A node \emph{of type $t$} is a node that conforms to $t$. For example, a node conforms to type \lstinline{person} if it has label \mbox{\lstinline{Person},} a property \lstinline{name} of type \lstinline{STRING}, and a property \lstinline{birthdate} of type \lstinline{DATE}. By default, types are \emph{closed}; that is, no other labels or properties are allowed. To change this,  we use the keyword \lstinline{OPEN}; for instance, 
\begin{lstlisting}
(person: Person OPEN {name STRING, birthdate DATE, OPEN})  
\end{lstlisting}
allows both additional labels and additional properties. \pgschema allows building more complex types from simpler ones by means of \emph{type union} \lstinline{|} and \emph{type combination} \lstinline{&}.  For example, a node of type \lstinline{customer} must have label \lstinline{Customer} and property \lstinline{id} of type \lstinline{INT32}, as well as either all labels and properties prescribed by type \lstinline{person} or all labels and properties prescribed by type \lstinline{company}. 

Edge types specify edge labels and properties, as well as the types of the endpoints. As for nodes, if an edge (along with its endpoints) satisfies the specification, we say it \emph{conforms} to the type, and an edge \emph{of type} $t$ is an edge that conforms to $t$. For instance, an edge of type \lstinline{owns} must have label \lstinline{Owns} and property \lstinline{since} of type \lstinline{DATE}, and it must lead from a node of type \lstinline{customer} to a node of type \lstinline{account}.

There are three basic kinds of integrity constraints. \lstinline{EXCLUSIVE} constraints ensure that indicated values never repeat for objects within a specified scope. For instance, in Figure~\ref{fig:pg-schema}, nodes of type \lstinline{customer} have different value of the \lstinline{id}
property. \lstinline{MANDATORY} constraints ensure that at least one specified object exists for each object within a specified scope. For instance, in Figure~\ref{fig:pg-schema}, each node of type \lstinline{account} has at least one incoming edge of type \lstinline{owns}.
Dually, \lstinline{SINGLETON} constraints ensure that at most one specified object exists for each object within a specified scope. For instance, in Figure~\ref{fig:pg-schema}, each node of type \lstinline{customer} has at most one outgoing edge of type \lstinline{owns}. \lstinline{MANDATORY} and \lstinline{SINGLETON} constraints are special cases of \lstinline{AT LEAST} $k$ and \lstinline{AT MOST} $k$ constraints, which are also supported in $\pgschema$.

In the remainder, by a (graph) schema, we mean a \pgschema graph type. For a schema $S$, we say that a graph \emph{conforms to  $S$} iff it is of type $S$. 




\section{Validation Problems}
\label{sec:validation-problems}

Validation of property graphs against \pgschema comprises several subtasks, corresponding to different stages of the validation process.

The most fundamental task is \emph{node type conformance}: given a node $u$ of a graph $G$ and a node type $t$ from a schema $S$, we must decide whether $u$ conforms to $t$. For instance, in our running example, for node $u_1$ and type \lstinline{person} we should answer `no', because $u_1$ has label \lstinline{Customer} and property \lstinline{id}, which are not allowed by type \lstinline{person}. For node $u_1$ and type \lstinline{customer} we should answer `yes'. In the context of node type conformance, complexity is measured in terms of $\|u\|_G$ rather than $\|G\|$, because only access to the labels and properties of $u$ is needed. More precisely, data complexity is measured exclusively in terms of $\|u\|_G$  and combined complexity is measured in terms of $\|u\|_G$  and the size $\|S\|$ of schema $S$.

Similarly, in the \emph{edge type conformance} problem,  given an edge $e$ in a graph $G$ and an edge type $t$ from a schema $S$, we must decide whether $e$ conforms to $t$. Note that this involves checking if the endpoints of the edge conform to node types specified in the edge type $t$, which means that node type conformance is a subtask of edge type conformance.  Continuing the running example, for edge $e_1$ and type \lstinline{owns} we should answer `yes', but for edge $e_2$ and type \lstinline{owns} we should answer `no', because the source of $e_2$ is not of type \lstinline{customer} as it is missing label 
\lstinline{Customer} and property 
\lstinline{id}. 
In the context of edge type conformance, we measure data complexity in terms of $\|e\|_G$ and combined complexity in terms of 
$\|e\|_G$  and $\|S\|$.

The third problem is \emph{constraint checking}: given a graph $G$ and an integrity constraint $C$ in schema $S$, we have to decide if  $C$ holds in $G$. In the running example, the answer would be `yes' for all four constraints. Because constraints in $\pgschema$ may refer to node and edge types (e.g., in the scope), node type conformance and edge type conformance are subtasks of constraint checking.  
Unlike node and edge type conformance, constraint checking requires access to the whole graph $G$, which is why we measure data complexity in terms of $\|G\|$ and combined complexity in terms of $\|G\|$ and $\|S\|$. 

Finally, in the problem \emph{graph type conformance}, we are given a graph $G$ and a schema $S$, and need to decide it $G$ conforms to $S$. Recall that for strict graph types this involves checking if each node and edge in $G$ conforms to one of the types listed in $S$ and each integrity constraint in $S$ holds in $G$, and for loose schemas we only need to check if
each integrity constraint in $S$ holds in $G$. In the running example, nodes $u_1$, $u_2$, $u_3$, and $u_4$ conform to types \lstinline{customer}, \lstinline{company}, \lstinline{person}, and \lstinline{account}, respectively,  and edge $e_1$ conforms to type \lstinline{owns}, but edge $e_2$ does not conform to  \lstinline{owns}, which is the only edge type listed in the schema. Hence, the answer should be `no'. However, if we replaced \lstinline{STRICT} with \lstinline{LOOSE} in the defnition of the graph type,  the answer should be `yes', because all four integrity constraints are satisfied. Let us point out  that graph type conformance has all three previously defined problems as subtasks. As for constraint checking, we measure data complexity  in terms of $\|G\|$ and combined complexity in terms of $\|G\|$ and $\|S\|$.

For the purpose of the complexity analysis we restrict our attention to graph types that do not use integrity constraints. The reason for this is that \pgkeys are defined  relative to a host query language and both the scope and the value of a key can be specified using an arbitrarily complex query in the host language \cite{pgkeys}. This means that the complexity of the validation problem is entirely dependent on the choice of the host language.  Graph types without integrity constraints, on the other hand, are a fully specified formalism and we can make absolute statements about their complexity.

\section{Normalized types}

An \emph{atomic node type} is a node type of one of the following forms:
\begin{itemize}
\item \lstinline+{}+ -- no labels and no properties, 
\item \lstinline+Lab {}+ -- only label \lstinline{Lab} and no properties, \item \lstinline+OPEN {}+ -- arbitrary labels and no properties,
\item \lstinline+{prop TP}+ -- only property \lstinline{prop} of type \lstinline{TP} and no labels, 
\item \lstinline+{OPEN}+ -- arbitrary properties and no labels. 
\end{itemize}
We will refer to them as \emph{blank},  \emph{label}, \emph{label wildcard}, \emph{property}, and \emph{property wildcard}.

Every node type can be represented as a normalized expression of linear size build from atomic node types and type references (i.e., names of other types) using operators \lstinline{&} and \lstinline{|}. This is done by replacing more complex types by combinations of suitable labels, properties, and wildcards. For example, the node types \lstinline{person} and \lstinline{customer} in the graph type in Figure~\ref{fig:pg-schema} can be defined with normalized expressions
\begin{lstlisting}
    Person {} & {name STRING} & {birthdate DATE}(*\,,*)
    Customer {} & (person | company) & {id INT32}(*\,.*)
\end{lstlisting}
A normalized node type expression can be seen as a tree in which each internal node is labelled with \lstinline{&} or \lstinline{|} and each leaf is labelled with an atomic node type or a type name. 

Edge types can be normalized in a similar way, based on the following \emph{atomic edge types}:
\begin{itemize}
\item \lstinline+()-[]->()+; 
\item \lstinline+(Lab {})-[]->()+, \lstinline+()-[Lab {}]->()+, \lstinline+()-[]->(Lab {})+;
\item \lstinline+(OPEN {})-[]->()+, \lstinline+()-[OPEN {}]->()+, \lstinline+()-[]->(OPEN {})+;
\item \lstinline+({prop TP})-[]->()+, \lstinline+()-[{prop TP}]->()+, \lstinline+()-[]->({prop TP})+;
\item \lstinline+({OPEN})-[]->()+, \lstinline+()-[{OPEN}]->()+, \lstinline+()-[]->({OPEN})+. 
\end{itemize}
We refer to them as \emph{edge blank}, \emph{source label}, \emph{edge label}, \emph{target label}, \emph{source label wildcard}, \emph{edge label wildcard}, \emph{target label wildcard}, etc. 
Note that this normalisation eliminates node type references from  edge types.

We will think of graph types as collections of pairs $(\nu, t)$ where $\nu$ is a node (resp. edge) type name and $t$ is a normalized node (resp. 
edge) type expression. We say a type name $\nu$ is \emph{defined} in a graph type $T$ if $T$ contains a pair $(\nu, t)$ for some  $t$; we then call $t$ the \emph{definition} of $\nu$. An occurrence of a type name in a type expression is called a \emph{reference}. All type names referenced in a graph type must be defined and type references must be acyclic. We only consider strict graph types, because loose graph types without integrity constraints are trivial. We sometimes blur the distinction between type name $\nu$ and its definition $t$ in $T$, and speak simply of a type $t$ in $T$, especially when discussing conformance. We write $|T|$ for the number of types defined in $T$ and $\|T\|$ for the total size of $T$ defined as $\sum_{(\nu,t)\in T} 1+ \|t\|$, where $\|t\|$ is the size of the type expression $t$.

\section{Expanding types}

Consider a graph type $T$. With every type expression  $t$ we can associate its $T$-expansion $\hat t$, which is a type expression obtained from $t$ by plugging in type definitions from $T$, exhaustively: pick a leaf in $t$ that is labelled with a type name $\nu'$ defined in $T$ and substitute it with the definition $t'$ of $\nu'$ in $T$. Because type references are acyclic, this process terminates. Provided that all type names referenced in $t$ are defined in $T$, the resulting type expression $\hat t$ has no more type references. 
The $T$-expansion $\hat t$ is equivalent to $t$, in the sense that exactly the same nodes or edges conform to $t$ and $\hat t$, but it can be exponentially larger. Because of the potential blow-up, algorithms should avoid materializing expansions.

\section{Types in disjunctive form}
\label{ssec:disjunctive}

We call node types $t$ and $t'$ \emph{equivalent}, written as $t\equiv t'$ if exactly the same nodes conform to them. Similarly for edges. 

Up to equivalence, the operators \lstinline{&} and \lstinline{|} are associative and commutative, and \lstinline{&} distributes over \lstinline{|}:  
\begin{gather*} 
t_1 \text{\lstinline{&}} (t_2 \text{\lstinline{&}} t_3) \equiv (t_1 \text{\lstinline{&}} t_2) \text{\lstinline{&}} t_3 \quad  t_1 \text{\lstinline{|}} (t_2 \text{\lstinline{|}} t_3) \equiv (t_1 \text{\lstinline{|}} t_2) \text{\lstinline{|}} t_3 \\
t_1 \text{\lstinline{&}} t_2 \equiv t_2 \text{\lstinline{&}} t_1 \quad t_1 \text{\lstinline{|}} t_2 \equiv t_2 \text{\lstinline{|}} t_1 \quad t_1 \text{\lstinline{&}} (t_2 \text{\lstinline{|}} t_3) \equiv (t_1 \text{\lstinline{&}} t_2) \text{\lstinline{|}} (t_1 \text{\lstinline{&}} t_3) 
\end{gather*}

We call a node or edge type  \emph{conjunctive} if it is built from atomic types using only \lstinline{&}. From the equivalences above it follows immediately that every conjunctive type $s$ can be equivalently written as a flat combination $a_1 \text{\lstinline{&}} a_2 \text{\lstinline{&}} \dots \text{\lstinline{&}} a_k$ of pairwise different atomic types $a_1, a_2, \dots, a_k$ for some $k \leq \|s\|$. Similarly, every node or edge type $t$ can be equivalently written in the \emph{disjunctive form} as a flat union $s_1 \text{\lstinline{|}} s_2 \text{\lstinline{|}} \dots \text{\lstinline{|}} s_n$  of at most exponentially many conjunctive types $s_1, s_2, \dots, s_n$, each of size at most  $\|t\|$.

\section{Data complexity of \pgtypes}

\begin{proposition}
\label{prop:datacomplexity}
Node, edge, and graph type conformance with respect to a fixed graph type can be decided in polynomial time.  
\end{proposition}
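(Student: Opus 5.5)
Since the graph type $T$ is fixed, I will precompute, once and for all, the disjunctive form of the $T$-expansion $\hat t$ of every type $t$ defined in $T$, as described in Section~\ref{ssec:disjunctive}. This costs time exponential in $\|T\|$, but that is a constant here. Each type $t$ then becomes a flat union $s_1 \text{\lstinline{|}} \dots \text{\lstinline{|}} s_n$ of conjunctive types, with $n$ bounded by a constant. Each $s_i$ is a flat combination $a_1 \text{\lstinline{&}} \dots \text{\lstinline{&}} a_k$ of pairwise different atomic types, again with $k$ bounded by a constant. Because expansion eliminates type references, all endpoint requirements of edge types are inlined as source and target atomic edge types. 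A node or edge conforms to $t$ iff it conforms to some $s_i$, so the problem reduces to checking conformance to a single flat conjunctive type.

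The main step is a direct characterization of conformance to a flat conjunctive node type $s = a_1 \text{\lstinline{&}}\dots\text{\lstinline{&}} a_k$. From the semantics of \lstinline{&}, a node conforms to $s$ iff its label set and record can be split among the $a_j$ so that each part conforms to its $a_j$. I will make this concrete as follows. Let $L_s$ be the set of labels occurring in label atoms of $s$, and $P_s$ the set of property atoms $\{prop\ TP\}$. Then $u$ conforms to $s$ iff:
\begin{itemize}
\item $L_s\subseteq\lambda(u)$;
\item $\lambda(u)\subseteq L_s$, unless $s$ contains a label wildcard;
\item for each property atom, $\pi(u)$ defines $prop$ with a value of type $TP$;
\item $\pi(u)$ has no other keys, unless $s$ contains a property wildcard.
\end{itemize}
There is one subtlety I expect to be the main obstacle: two atoms with the same key but different value types, which require the value to lie in both types. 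This must be pinned down against the official \pgschema semantics of combination. Whatever that semantics is, it is a fixed finite set of conditions, and each can be checked by scanning $\lambda(u)$ and $\pi(u)$. So checking $u$ against $s$ takes time polynomial (indeed linear, up to the cost of value-type membership tests) in $\|u\|_G$. If I prefer to avoid the explicit characterization, I can instead guess-free enumerate: since $k$ is constant, assigning each label and key of $u$ to one of the $k$ atoms can be done independently per element. The only nonlocal requirements are per-atom cardinalities: a label or property atom must receive exactly its element, and a blank atom must receive nothing. These are checked directly.

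Edge types are handled identically. An atomic edge type constrains exactly one of the three components (source, edge, target), so a flat conjunctive edge type splits into three flat conjunctive node-like types, one per component. The edge $e$ conforms iff each component conforms to its part, which takes time polynomial in $\|e\|_G$. Finally, for graph type conformance (strict, no constraints), I iterate over all nodes and edges of $G$ and over the constantly many types of $T$, testing conformance as above. The total time is $O(|T|\cdot\sum_x \mathrm{poly}(\|x\|_G))$, which is polynomial in $\|G\|$ since each edge's size counts its endpoints at most once per edge.
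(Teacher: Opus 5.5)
Your proposal is correct and follows the paper's proof essentially step for step. You precompute the constant-size disjunctive form of each $T$-expansion and test each conjunctive disjunct with a four-part label/property check, which is equivalent to the paper's. Your third condition already forces a value whose key occurs in several property atoms to lie in every listed type, and that settles the subtlety you flag. You then handle edges componentwise and graphs by iterating over all nodes, edges and types, exactly as the paper does.

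Drop the optional per-element assignment alternative and rely on the explicit characterization. As phrased, with each atom receiving ``exactly its element'', it would wrongly reject a node whose single property is required by two distinct property atoms with the same key.
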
 

\begin{proof} We begin from node type conformance. Let $T$ be a fixed graph type. Consider type $t$ from $T$. Let $\hat t$ be the $T$-expansion of $t$. As discussed in Section~\ref{ssec:disjunctive}, we can rewrite $\hat t$ as a union $s_1 \text{\lstinline{|}} s_2 \text{\lstinline{|}} \dots \text{\lstinline{|}} s_n$ of  conjunctive types $s_1, s_2, \dots, s_n$. Because  $T$ is fixed, the size of the resulting type expression is bounded by a constant. In order to check if a given node $u$ in graph $G$ conforms to $t$ it suffices to check if $u$ conforms to any $s_i$, which is easy because $s_i$ is a conjunctive type. Indeed, node $u$ conforms to a conjunctive type $s$ iff 
\begin{itemize}
\item for each atomic type \lstinline+Lab {}+ included in $s$, $u$ has label \lstinline+Lab+;
\item  for each label \lstinline{Lab} in $u$, $s$ includes the atomic type \lstinline+Lab {}+ or the atomic type \lstinline+OPEN {}+; 
 
\item for each atomic type \lstinline+{prop TP}+ included in $s$, $u$ has property \lstinline{prop} with a value of type \lstinline+TP+; and 
\item for each property \lstinline{prop} with value $d$ in $u$, $s$ includes 
 the atomic type  \lstinline+{prop TP}+ such that $d$ is of type \lstinline+TP+ or the atomic type
\lstinline+{OPEN}+.
\end{itemize}
These conditions can be easily checked in polynomial time.

The algorithm for edge type conformance is analogous. For graph type conformance we simply iterate over all nodes in $G$ and node types in $T$, and over all edges in $G$  and edge types in $T$.
\end{proof}

\section{Pruning graph types}
\label{ssec:pruning}

Given a node $u$ in graph $G$, and a  graph type $T$, we can efficiently \emph{prune $T$ wrt.~$u$}  to obtain graph type $T_u$ such that 
\begin{itemize}
\item 
$T_u$ does not use atomic node types that are \emph{incompatible} with $u$: \lstinline+Lab {}+ such that $u$ has no label \lstinline+Lab+ and \lstinline+{prop TP}+ such that $u$ has no property \lstinline+prop+ with value of type \lstinline+TP+;
\item $T_u$ defines a subset of types defined by $T$ and for every type $t$ in $T$, $u$ conforms to $t$ in the context of $T$ iff $t$ is defined in $T_u$ and $u$ conforms to $t$ in the context of $T_u$.
\end{itemize}

This is achieved by eliminating from type expressions in $T$ all leaves labelled with atomic types incompatible with $u$. However, when a child of an \lstinline{&}-node is removed, the \lstinline{&}-node must be removed too; if all children of an \lstinline{|}-node are removed, the \lstinline{|}-node must be  removed too; and when the root of the expression defining type $t$ is removed, all references to $t$ must be removed too.


Similarly, given an edge $e$ in graph $G$, one can prune graph type $T$ wrt. $e$, without affecting the conformance of $e$, to obtain graph type  $T_e$ that only uses atomic edge types compatible with $e$: 
\begin{center}
\lstinline+(Lab {})-[]->()+, \quad \lstinline+()-[Lab {}]->()+, \quad \lstinline+()-[]->(Lab {})+
\end{center}
such that the source of $e$, $e$, and the target of $e$, respectively, has label \lstinline{Lab}, and 
\begin{center}
\lstinline+({prop TP})-[]->()+, \quad \lstinline+()-[{prop TP}]->()+, \quad \lstinline+()-[]->({prop TP})+
\end{center}
such that the source of $e$, $e$, and the target of $e$, respectively, has propety  \lstinline{prop} with value of type \lstinline{TP}. 

Pruning can be done in polynomial time. With a little bit of care it can be  implemented in quadratic time: $O\big(\|T\|\cdot \big(|T| +  \|u\|\big)\big)$ for $T_u$ and $O\big(\|T\| \cdot\big(|T| + \|e\|\big)\big)$ for $T_e$. 

\section{Witnessing branch-sets}

A \emph{branch-set} in a type expression $t$ is a set of branches in the expression tree of $t$ such that no two branches split in a node labelled with \lstinline{|}.  A branch-set $B$ in a node type expression $t$ \emph{covers} a node $u$ in $G$ if 
\begin{itemize}
\item for each label \lstinline{Lab} in $u$, some branch in $B$ ends in a leaf labelled with \lstinline+Lab {}+ or \lstinline+OPEN {}+;
\item for each property \lstinline{prop} with value $d$ in $u$, some branch in $B$ ends in a leaf labelled with \lstinline+{OPEN}+  or with \lstinline+{prop TP}+ such that $d$ is of type \lstinline{TP}. 
\end{itemize}

Similarly, a branch-set $B$ in an edge type expression $t$ covers an edge $e$ in graph $G$ if for each label and property in $e$ or one of its endpoints, some branch in $B$ leads to a leaf labeled with a suitable atomic edge type. 

\begin{lemma}
\label{lem:witness}
Let $G$ be a graph and $T$ a graph type. 
A node $u$ in $G$ conforms to a node type $t$ in  $T_u$ iff $u$ is covered by a branch-set of size at most  $\|u\|_G$ in the $T_u$-expansion of $t$. The same holds for edges. 
\end{lemma}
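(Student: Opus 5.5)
We prove both directions via a characterization of conformance in terms of \emph{full} branch-sets. Call a branch-set $B$ in the expression tree of $\hat t$, the $T_u$-expansion of $t$, \emph{choice-closed} if it arises from fixing, at every \lstinline{|}-node reached, exactly one child, and including at every \lstinline{&}-node reached all children; so $B$ is the set of all root-to-leaf branches of a ``choice subtree''. By induction on the structure of $\hat t$, using the equivalences of Section~\ref{ssec:disjunctive}, the conjunctive types $s_i$ in the disjunctive form of $\hat t$ are exactly the combinations of the leaves of the choice-closed branch-sets. Hence $u$ conforms to $t$ iff some choice-closed branch-set $B$ has the property that the conjunction of its leaves accepts $u$. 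We then verify the four conditions from the proof of Proposition~\ref{prop:datacomplexity} for that conjunction. Since we work in $T_u$, every label and property atom occurring in $\hat t$ is compatible with $u$, so the two ``requirement'' conditions hold automatically. Hence $u$ conforms iff some choice-closed $B$ covers $u$. Here we rely on the pruning property that conformance to $t$ in $T$ and in $T_u$ coincide, and that $T_u$-expansion preserves conformance.

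For ($\Rightarrow$), take a choice-closed $B$ covering $u$. For each of the at most $\|u\|_G$ labels and properties of $u$, pick one branch of $B$ witnessing the covering condition. The chosen subset $B'\subseteq B$ has size at most $\|u\|_G$. It is still a branch-set, since no two branches of $B$ split at a \lstinline{|}-node (the choice subtree takes one child at each \lstinline{|}-node), and subsets inherit this. It still covers $u$ by construction.

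For ($\Leftarrow$), take a branch-set $B'$ of size at most $\|u\|_G$ covering $u$. We extend $B'$ to a choice-closed set. Consider the subtree formed by the branches of $B'$. At every \lstinline{|}-node on it exactly one child is used, since branches do not split at \lstinline{|}-nodes. Complete it top-down: at each \lstinline{&}-node add all missing children, and at each newly reached \lstinline{|}-node choose an arbitrary child. The root of each expanded type definition in $T_u$ survived pruning, so every node has at least one child and this process reaches leaves. The resulting choice subtree has branch set $B\supseteq B'$, which still covers $u$, so $u$ conforms.

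The main obstacle is the completion step. After pruning, some \lstinline{|}-nodes may have lost children, and removed \lstinline{&}-nodes propagate upward. We must argue that every remaining internal node has at least one remaining child and every leaf is a compatible atom, blank, or wildcard; this is exactly what the pruning rules guarantee, including the removal of references to eliminated types. The expansion may be exponential, but the argument never materializes it. The edge case is identical, treating source, edge, and target atoms as separate coordinates, with $\|e\|_G$ counting labels and properties of $e$ and both endpoints.
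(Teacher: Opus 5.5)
The paper states this lemma without proof, and your argument is correct. It is the natural proof from the paper's own ingredients. The first is the disjunctive form indexed by choice subtrees. The second is the four-condition test for conjunctive types from the proof of Proposition~\ref{prop:datacomplexity}, whose two requirement conditions become vacuous after pruning. The third is extracting one witness branch per label or property for ($\Rightarrow$). The fourth is completing a branch-set to a full choice subtree for ($\Leftarrow$), which pruning makes possible because every surviving internal node keeps a child.

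Two minor points. When $B'$ is empty (for instance when $\|u\|_G=0$), the completion should start at the root. Your appeal to conformance in $T$ versus $T_u$ is unnecessary, since the lemma is stated entirely relative to $T_u$.
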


\section{Combined complexity of \pgtypes}

\begin{theorem}
\label{thm:combinedcomplexity}
Node, edge, and graph type conformance are $\np$-complete problems (in terms of combined complexity).  
\end{theorem}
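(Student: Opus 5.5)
Membership in NP and NP-hardness are handled separately; the upper bound comes from Lemma~\ref{lem:witness}.

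\emph{Membership.} For node type conformance, given $u$ in $G$ and $t$ in $T$, we first prune $T$ wrt.~$u$ in polynomial time, as described in Section~\ref{ssec:pruning}. This gives $T_u$, and $u$ conforms to $t$ iff $t$ is defined in $T_u$ and $u$ conforms to $t$ in $T_u$. By Lemma~\ref{lem:witness}, this holds iff some branch-set of size at most $\|u\|_G$ in the $T_u$-expansion of $t$ covers $u$. The expansion itself may be exponential, but a single branch in it has a succinct description: a path in the expression tree of $t$ that descends through reference leaves into the definitions they name. Since references are acyclic, such a path visits each type definition at most once, so its length is at most $\|T\|$. The certificate is therefore a list of at most $\|u\|_G$ such paths. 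The verifier checks three things: each path is a valid branch ending in an atomic leaf; no two paths diverge at a \lstinline{|}-node (two paths split where their longest common prefix ends, so this is easy to check); and the leaves cover every label and property of $u$. The subtle point is that the split condition in the expansion must be checked on copies of subexpressions. Two paths that enter the same definition through different reference occurrences lie in different copies, and they split at the ancestor where their prefixes diverge. Encoding a path as its full sequence of tree positions handles this correctly. The edge case is identical, using $T_e$ and $\|e\|_G$. Graph conformance needs one such certificate per node or edge of $G$, each guessed for one chosen type, so it is also in NP.

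\emph{Hardness.} The main difficulty is designing the reduction so that the closedness of types enforces consistency. I would reduce from SAT, or from 3-SAT or exact cover. For each variable $x_i$, define two types over the same property: $p_i$ is \lstinline+{x_i OPEN}+ and $n_i$ is \lstinline+{nx_i OPEN}+, or just label atoms $\mathtt{X}_i$ and $\mathtt{NX}_i$. Take a node $u$ with labels $\mathtt{C}_1,\dots,\mathtt{C}_m$, one per clause, and no properties. The type is
\[
t \;=\; \mathtt{\&}_{i=1}^{n}\,(\mathit{pos}_i \mid \mathit{neg}_i),
\]
where $\mathit{pos}_i$ is the combination of the blank type with the label atoms $\mathtt{C}_j \{\}$ for every clause $C_j$ containing $x_i$, and $\mathit{neg}_i$ is defined analogously for $\neg x_i$. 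By the semantics, the labels allowed by a chosen disjunct are exactly the labels its atoms mention, and a label atom present in a conjunct requires that label. That requirement would force every clause containing the literal to appear, but all $\mathtt{C}_j$ are present on $u$, so this is harmless. The closedness condition is vacuous here, and coverage needs each $\mathtt{C}_j$ to be allowed by some chosen literal. Hence $u$ conforms to $t$ iff the formula is satisfiable.

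The obstacle is that this gadget does not seem to need any references, and yet the disjunctive-form argument only suggests hardness arising from expansion. I would therefore double-check the semantics of \lstinline{&}: the combined type has the union of the allowed labels and requires all required ones. Given that semantics the reduction stands, and it also transfers to edges by using edge labels, and to graphs via a one-node graph.
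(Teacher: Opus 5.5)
Your proposal is correct and follows the paper's proof essentially step for step. Membership goes by pruning, invoking Lemma~\ref{lem:witness}, and guessing a polynomial-size branch-set traced through references, with the \texttt{|}-split check done on full position sequences. Hardness uses the same CNF-SAT reduction, with clause labels $C_1,\dots,C_m$ on a single property-free node and a combination over variables of unions of two clause-label combinations, transferred to edges and graphs in the same way.

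Drop the unused property gadget and the closing hedge: hardness comes from the \texttt{\&}-over-\texttt{|}-over-\texttt{\&} nesting, not from references. Also, ``each $C_j$ is allowed by some chosen literal'' is precisely the closedness condition, so it is the label-requirement condition, not closedness, that is vacuous here.
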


\begin{proof}
We begin with the upper bound for node type conformance. Let $u$ be a node in $G$ and $t$ a type in graph type $T$. We can prune $T$ in polynomial time to obtain $T_u$. By Lemma~\ref{lem:witness} it is enough to check if there is branch-set of size at most $\|u\|_G$ in the expansion of $t$ that covers $u$. The idea is to guess this branch-set and verify that it covers $u$. Owing to the acyclicity of type references, every branch has length bounded by the size of $T$, so the total size of the branch-set is polynomial. The only obstacle is that we cannot compute the expansion of $t$, as it might have exponential size. Instead, we trace branches from the root of $t$  moving from one type expression to another through type references. Checking that no branches split in an \lstinline{|}-node requires a bit of care, because branches might split in an \lstinline{&}-node in some type expression, then pass through a reference to the same type $t'$, continue together, until they maybe split again, possibly in an \lstinline{|}-node  (only the first split matters). Checking that a guessed branch-set covers $u$ is straightforward.

The upper bound for edge type conformance is entirely analogous. For graph type conformance, we compute the pruned graph type for each node and edge in the input graph, guess a type in the pruned graph type, guess a covering branch-set, and verify it. We accept if each branch-set passes the verification.

Next we prove the lower bound for node type conformance. We reduce from the \textsc{CNF-SAT} problem. Let $\varphi = C_1 \land C_2 \land \dots \land C_m$ be a propositional formula over variables $x_1,x_2, \dots, x_n$, where each clause $C_i$ is a disjunction of literals (variables or their negations); e.g., $(x_1 \lor \lnot x_2 \lor x_3) \land (\lnot x_1 \lor x_2 \lor  x_3)\land(\lnot x_3)$. We construct a node type expression $t_\varphi$ and graph $G_\varphi$  consisting of a single node $u$ such that $u$ conforms to $t_\varphi$ iff $\varphi$ is satisfiable. 
We use $C_1, C_2, \dots, C_m$ as labels. We let $t_\varphi = (t_{x_1} \mathbin{\texttt{|}} t_{\lnot x_1}) \mathbin{\texttt{\&}} (t_{x_2} \mathbin{\texttt{|}} t_{\lnot x_2}) \mathbin{\texttt{\&}} 
\ldots \mathbin{\texttt{\&}} (t_{x_n} \mathbin{\texttt{|}} t_{\lnot x_n})$ where $t_L$ is the combination of types $C_i\,\texttt{\{\}}$ such that clause $C_i$ contains literal $L$; if there are none, we let $t_L=\texttt{\{\}}$. For the formula above, the type expression is 
\[
\big(
C_1\,\texttt{\{\}} \mathbin{\texttt{|}}
C_2\,\texttt{\{\}} 
\big)
\mathbin{\texttt{\&}} 
\big(
C_2\,\texttt{\{\}} 
\mathbin{\texttt{|}} 
C_1\,\texttt{\{\}} 
\big)
\mathbin{\texttt{\&}} 
\big((
C_1\,\texttt{\{\}} 
\mathbin{\texttt{\&}} 
C_2\,\texttt{\{\}} 
)
\mathbin{\texttt{|}}
C_3\,\texttt{\{\}}
\big).
\]
The graph $G_\varphi$ contains a single node $u$ with labels $C_1, C_2, \ldots, C_m$ and no properties. It is routine to verify that the reduction is correct. 

Note also that $\varphi$ is satisfiable iff $G_\varphi$ conforms to the (strict) graph type whose only node type is $t_\varphi$. This shows that graph conformance is $\np$-hard, too. For edge type conformance we replace atomic node types $C_i\,\texttt{\{\}}$ with atomic edge types $\texttt{()-[}C_i\,\texttt{\{\}]->()}$ in the definition of $t_\varphi$ and let $G_\varphi$ be a graph with two nodes without any labels or properties, and a single edge between them with labels $C_1, C_2, \dots, C_m$ and no properties.
\end{proof}

\section{A tractable case}

The type expression constructed in the reduction showing $\np$-hardness of type conformance in Theorem~\ref{thm:combinedcomplexity} is a combination of unions of combinations of atomic types. As it turns out, this nesting pattern of \lstinline{&} and \lstinline{|} is precisely the source of hardness. 

A type $t$ in a graph type $T$ is \emph{\lstinline{&|&}-free} if  its $T$-expansion contains no \lstinline{|}-labelled node that has an  \lstinline{&}-labelled ancestor and an \lstinline{&}-labelled descendant. A graph type $T$ is \emph{\lstinline{&|&}-free} if each type in $T$ is \emph{\lstinline{&|&}-free}. Note that the pruning procedure described in  Section~\ref{ssec:pruning} preserves \lstinline{&|&}-freeness.

\begin{theorem}
Type conformance for \lstinline{&|&}-free types  is in $\ptime$.  
\end{theorem}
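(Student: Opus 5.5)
The plan is to reduce type conformance to node type conformance for a single node $u$ and type $t$ in $T$. First we prune $T$ with respect to $u$ as in Section~\ref{ssec:pruning}, obtaining $T_u$. Pruning preserves \lstinline{&|&}-freeness, so $T_u$ is still \lstinline{&|&}-free. Every atomic type left in $T_u$ is now compatible with $u$: its label or property occurs in $u$ with the right value type. In the $T_u$-expansion, a conjunctive type then accepts $u$ iff its atomic types cover $u$. There is no longer any requirement that required labels or properties be present, since all surviving leaves are satisfied.

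Next we exploit the shape of a \lstinline{&|&}-free expansion $\hat t$. Every \lstinline{|}-node either has no \lstinline{&}-ancestor or has no \lstinline{&}-descendant. Hence $\hat t$ has a top part made only of \lstinline{|}-nodes. Below it, each maximal subtree rooted at an \lstinline{&}-node (or at a leaf) contains \lstinline{|}-nodes only in positions with no \lstinline{&}-descendants. Such \lstinline{|}-nodes are unions of atomic types, i.e.\ unions of leaves after flattening.

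The top unions are handled by taking a disjunction over alternatives. Inside an \lstinline{&}-rooted subtree, a choice at a lower \lstinline{|}-node picks one atomic leaf from a set. By monotonicity of covering, choosing more leaves only helps, except that each union contributes exactly one leaf. But every remaining leaf is compatible, so any choice is harmless. Therefore the subtree accepts $u$ iff the union, over all its \lstinline{&}-reachable leaves and all leaves under its \lstinline{|}-nodes, covers $u$. This union is taken greedily: we may pick, for each label of $u$, any union containing a suitable leaf. Since unions are independent under \lstinline{&}, we take each union's best leaf per needed item.

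That last step fails as stated, because one union yields only one leaf while several labels might need it. So the correct test is a bipartite matching, or more simply the following. The labels and properties of $u$ covered by mandatory leaves, reachable through \lstinline{&} only, are free. The remaining items must each be covered by distinct lower unions, unless a wildcard covers several items. I will compute this by bipartite matching between uncovered items and lower union occurrences. A union containing \lstinline{OPEN {}} can absorb all labels, and one containing \lstinline{{OPEN}} all properties, so wildcards are handled as special capacity.

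The main obstacle is doing this without materializing $\hat t$, which can be exponential. I would compute, for each type name, a summary by dynamic programming over references in topological order. The summary records whether the type is a pure union-of-atoms (the set of compatible atoms it allows), or an \lstinline{&}-rooted type (the set of its mandatory items together with the multiset of lower unions, stored with multiplicities in binary), or a top union (the list of its alternatives' summaries, flattened). Multiplicities matter only up to $\|u\|_G$, so I cap them there, which keeps summaries polynomial. Matching with capped capacities is then polynomial via max-flow. Edge conformance is analogous, with source, edge, and target items. Graph conformance iterates over all elements and types.
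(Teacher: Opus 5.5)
Your architecture matches the paper's. You prune with respect to $u$ so that only coverage matters, split the \texttt{\&|\&}-free expansion into top unions over \texttt{\&}-subtrees whose unions are flat unions of atoms, test each subtree by bipartite matching, and work on the DAG of references instead of $\hat t$. The wildcard step, however, is a genuine gap. A union occurrence that selects \texttt{OPEN \{\}} covers all labels of $u$ and nothing else. It must therefore either be matched to a single item, or absorb the whole label set and nothing more. This all-or-nothing alternative cannot be encoded by capacities, and giving wildcard-containing unions ``special capacity'' is unsound. Take $t = \texttt{\{\}} \mathbin{\texttt{\&}} (\texttt{OPEN \{\}} \mathbin{\texttt{|}} \texttt{\{p TP\}})$ and a node with one label and property \texttt{p} of type \texttt{TP}. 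Neither choice covers the node, yet a network in which the single union occurrence absorbs the label through its wildcard and also takes \texttt{p} would accept. The missing idea is the paper's bounded branching. For each of the two (for edges, six) wildcard atoms $a$, branch: either no occurrence selects $a$, and you delete $a$ from all unions; or you guess one occurrence of a union containing $a$ that selects it, remove that occurrence together with all items covered by $a$, and recurse. The branching is linear and the depth is constant, and every leaf is an ordinary matching with capped multiplicities. Equivalently, try all $2^2$ (resp.\ $2^6$) cases, and in each case replace the items covered by a used wildcard by one item adjacent exactly to the unions containing that wildcard.

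Two further remarks. First, the ``greedy'' equivalence in your third paragraph is false, as you noticed yourself, and should simply be removed. Second, your capped multiplicities are genuinely needed, and here you are more careful than the paper. The paper compiles a middle-layer node into the combination of the \emph{set} of bottom-layer nodes reachable from it. For $y = z \mathbin{\texttt{\&}} z$ with $z = \texttt{A \{\}} \mathbin{\texttt{|}} \texttt{B \{\}}$, a node with labels \texttt{A} and \texttt{B} conforms to $y$ but not to the single disjunction $z$. So occurrences along \texttt{\&}-paths must be counted, and capping them at $\|u\|_G$ keeps this polynomial. In the same spirit, store the alternatives of a top union as a set of DAG nodes, or simply propagate a Boolean through the top layer. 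A flattened list with repetitions is exponential for a chain $\nu_i = \nu_{i+1} \mathbin{\texttt{|}} \nu_{i+1}$.
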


\begin{proof} It suffices to prove the claim for node and edge type conformance: the case of graph type conformance  follows immediately from these two.   

We begin with a special case. Consider a type $t = t_1 \verb+&+ t_2 \verb+&+ \dots \verb+&+ t_n$ where each  $t_i$ is a union of atomic node types: labels and properties, but not wildcards. Suppose also that $t$ is pruned with respect to node $u$ in graph $G$. We can decide whether $u$ conforms to $t$ by reduction to the maximum matching problem. Define a bipartite graph whose nodes are $t_1, \dots, t_n$ and all labels and properties of $u$.  We include an edge from $t_i$ to a label \lstinline+Lab+ if   \lstinline+Lab {}+ is one of the disjuncts of $t_i$. Similarly, we include an edge from $t_i$ to a property \lstinline+prop+ if one of the disjuncts of $t_i$ is  \lstinline+{prop TP}+ such that the value of \lstinline+prop+ in $u$ is of type \lstinline+TP+. Clearly, $u$ conforms to $t$ iff the bipartite graph has a matching of size $\|u\|$, which must be the maximum matching, as there are no more nodes in the second part of the graph. Using the Hopcroft--Karp algorithm~\cite{HopcroftK73} we can compute the maximum matching in a bipartite graph with $V$ nodes and $E$ edges in time $O(E\sqrt{V})$. In our case, $V=n+\|u\|$ and $E=|t|$. The argument for edge type conformance is identical, except that in the construction of the bipartite graph we must make sure that labels and properties of the source node, the target node and the edge itself are represented separately.

Suppose now that $t$ does contain wildcards. We eliminate them one by one. Pick a wildcard atomic type $a$ (in node case there are only two: label wildcard and property wildcard; in the edge case there are six). Branch out into the following cases.  
\begin{itemize}
\item  Remove all occurrences of $a$ from $t$. If some $t_j$ becomes empty, remove it from $t$. Solve the remaining problem recursively. 
\item Pick $i$ such that $t_i$ contains $a$ as a disjunct and remove $t_i$ from $t$. Remove $a$ from the remaining $t_j$'s  (if $t_j$ becomes empty, remove it from $t$). Remove all components of the input node or edge that are covered by $a$: if $a$ is \lstinline+OPEN {}+ remove all labels, if $a$ is \lstinline+{OPEN}+ remove all properties, and similarly in the edge case. Solve the remaining problem recursively.  
\end{itemize}
When the last wildcard atomic type is eliminated, use the previously described method for types without wildcards.  

This algorithm has linear branching and the depth of the recursion is bounded by the number of different wildcard atomic types: two if $t$ is a node type and six if $t$ is an edge type. This means that the algorithm is polynomial.

Let us now consider the general case. We focus on node type conformance; the argument for edges is identical. Let $t$ be a \lstinline{&|&}-free node type in graph type $T$ and let $u$ be a node in graph $G$. Without loss of generality we can assume that $T$ is pruned with respect to $u$. We can represent the $T$-expansion $\hat t$ of $t$ as a DAG of size linear in the size of $t$ and $T$, simply by materializing each type reference as an edge to the root of the suitable defining expression in $T$ (references to the same type become edges with the same target). The DAG has a single source node $x$, corresponding to the root of the expression defining type $t$ in $T$ and multiple sink nodes. 

Because $t$ is \lstinline{&|&}-free, we can partition the non-sink nodes of the DAG into three layers: top, middle, and bottom. Nodes from the middle layer are labelled with  \lstinline{&}, and nodes from the top and bottom layers are labelled with \lstinline{|}. All non-sink successors of nodes from the bottom layer belong to the bottom layer as well, and all non-sink successors of nodes from the middle layer belong to the middle or bottom layer. By introducing a linear number of dummy nodes we can ensure that all three layers are nonempty, all successors of nodes from the top layer belong to the top or middle layer, and all successors of nodes from the middle layer belong to the middle or bottom layer. In particular, only nodes from the bottom layer can have sink nodes as successors. Every node $z$ from the bottom layer can be \emph{compiled} into a flat union of linearly many atomic types, simply by collecting all sink nodes that are reachable from $z$. Next, we can compile each node $y$ in the middle layer into a combination of linearly many disjunctions of linearly many atomic types: we do so by taking the combination of types associated with nodes from the bottom layer that are reachable from $y$ by a directed path that does not visit any other nodes from the bottom layer. It then suffices to check whether node $u$ conforms to the type thus associated with some node $y$ from the middle layer, which falls within the special case we have already solved. 
\end{proof}

\printbibliography
\end{document}